\documentclass[12pt]{article}
\usepackage[margin=1in]{geometry}
\usepackage{amsmath,amssymb,amsthm}
\usepackage{booktabs}
\usepackage{caption}
\usepackage{graphicx}
\usepackage{hyperref}
\usepackage{placeins}
\usepackage{setspace}
\usepackage{xcolor}
\usepackage{url}
\newcommand{\1}{\mathbf{1}}

\newtheorem{proposition}{Proposition}

\theoremstyle{definition}

\makeatletter
\renewcommand\section{\@startsection {section}{1}{\z@}%
  {-3.5ex \@plus -1ex \@minus -.2ex}%
  {2.3ex \@plus .2ex}%
  {\normalfont\Large\bfseries\raggedright
   \hyphenpenalty=10000\exhyphenpenalty=10000}}
\renewcommand\subsection{\@startsection{subsection}{2}{\z@}%
  {-3.25ex\@plus -1ex \@minus -.2ex}%
  {1.5ex \@plus .2ex}%
  {\normalfont\large\bfseries\raggedright
   \hyphenpenalty=10000\exhyphenpenalty=10000}}
\renewcommand\subsubsection{\@startsection{subsubsection}{3}{\z@}%
  {-3.25ex\@plus -1ex \@minus -.2ex}%
  {1.5ex \@plus .2ex}%
  {\normalfont\normalsize\bfseries\raggedright
   \hyphenpenalty=10000\exhyphenpenalty=10000}}
\makeatother

\hypersetup{
  hidelinks,
  pdftitle={Routing Frictions and Executable Liquidity in Fragmented Markets},
  pdfauthor={Wen-Ting Wang},
  pdfsubject={Routing frictions and executable liquidity},
  pdfkeywords={market fragmentation, executable liquidity, venue activation,
  transaction costs, automated market makers, decentralized exchanges}
}

\begin{document}
\pagenumbering{roman}
\begin{titlepage}
\thispagestyle{empty}
{\raggedright\hyphenpenalty=10000\exhyphenpenalty=10000
\noindent{\Large\bfseries Routing Frictions and Executable Liquidity in\par}
\noindent{\Large\bfseries Fragmented Markets\par}}
\vspace{0.3in}
\noindent Wen-Ting Wang\par
\noindent Institute of Statistics, National Chung Hsing University, Taichung, Taiwan\par
\noindent ORCID: \href{https://orcid.org/0000-0003-3051-7302}{0000-0003-3051-7302}\par
\noindent Corresponding author. Email: \href{mailto:egpivo@gmail.com}{egpivo@gmail.com}
\vspace{0.5in}
\begin{abstract}
Public blockchains can make many trading venues simultaneously visible and
mechanically reachable, yet an order still has to pay to activate each
additional venue: technological connectivity need not translate into
economically integrated execution. Automated-market-maker (AMM) pools make
this gap directly measurable, because exact pre-trade venue states,
transaction-level routing costs, and realized venue use can be reconstructed
jointly from public blockchain records. Using 13,768 sampled
family-transactions from 29 Ethereum and Base token-pair families and 71
sibling pools, we document a sharp compression from gross to net execution
opportunity: the equal-chain lower bound on gross multi-venue gains is
34.85\%, while the corresponding upper bound after measured access costs is
6.89\%, as access costs eliminate 80.45\% of point-identified states
with positive gross gains. Realized routing shows a distinct second
contrast: actual multi-pool activation occurs in only 1.203\% of the opportunity
population, yet among 170 eligible realized integrated routes, observed
allocation captures 94.8\% of aggregate feasible gain. Holding recipient
orders and exact venue states fixed, replacing the Ethereum routing-cost
regime with Base's lower-cost regime materially expands executable
integration under both state populations. For digital-market design, the
results imply that blockchain scaling, cross-venue connectivity, and
displayed liquidity do not by themselves establish economically integrated
execution: integration is an order-specific property that depends on
transaction-level access costs.
\end{abstract}

\vspace{0.25in}
\noindent\textbf{Keywords:} market fragmentation; executable liquidity; venue
activation; transaction costs; automated market makers; decentralized
exchanges.

\vspace{0.12in}
\noindent\textbf{JEL classifications:} G14; D47; C61; C63.

\end{titlepage}

\pagenumbering{arabic}

\section{Introduction}

Public blockchains can make many trading venues simultaneously visible and
mechanically reachable. Anyone can query the exact reserves, price, and depth
of every pool trading a given asset pair, and a single transaction can in
principle route through several pools at once. Yet reaching an additional
venue is not free: it requires another contract call, additional state
access, and additional transaction data, each carrying a real, order-specific
cost. This is a digital-market question rather than a purely computational
one: does technological connectivity translate into executable market
integration, or can a market remain economically segmented at the point of
execution even when every venue is simultaneously visible and reachable?

This question is difficult to answer with conventional market data: the
exact pre-trade state of every alternative venue, the incremental cost of
reaching each one, and the venues an order actually used are typically held
by different parties, recorded in different systems, or not recorded at all
(Foucault \& Menkveld, 2008; Kwan et al., 2015; Battalio et al., 2016).
Programmable, transparent digital markets change this. Uniswap v3 fee tiers
create economically substitutable sibling pools for the same token pair,
whose execution curves and state transitions are public block by block
(Angeris et al., 2021; Adams et al., 2021; Lehar et al., 2024); transaction
receipts reveal exactly which pools an order used, and network gas and, on
Base, L1 data-fee schedules price the incremental cost of reaching an
additional venue. Public event order lets us reconstruct the state of every
sibling pool immediately before a transaction, so exact pre-trade venue
states, venue-specific execution curves, order size, incremental routing
cost, realized venue activation, and transaction ordering can all be
recovered jointly. This joint observability allows us to measure, transaction
by transaction, the wedge between what is technologically reachable and what
is economically worth reaching.

Technological integration is therefore not the same as economic integration,
and posted liquidity is not the same as order-level executable liquidity. A
second venue improves execution only when the gain from reaching it exceeds
the incremental cost of doing so, so prices can be closely linked and
liquidity can be visible across every venue while the market remains
segmented at the point of execution. We call the depth that survives this
comparison executable liquidity. Because it depends on the order, the
configuration of venue states when it arrives, and the incremental access
cost, executable liquidity is an order-specific object: two orders reaching
the same fragmented market can face different effective markets, one
profitably combining several venues while the other optimally behaves as if
only one venue existed. This is a classical market-fragmentation problem
(Pagano, 1989; Chowdhry \& Nanda, 1991; Madhavan, 1995; O'Hara \& Ye, 2011;
Malamud \& Rostek, 2017; Chen \& Duffie, 2021), recreated in a digital-market
environment where venue state, execution technology, and transaction-level
access cost are unusually observable.

The mechanism is direct. Fragmented sibling venues offer potentially
complementary depth, and combining them can raise gross execution value
above the best single venue. Activating an additional venue, however,
requires additional execution resources, so each order faces an endogenous
threshold: gross routing gain must exceed incremental access cost. Orders on
opposite sides of this threshold face different effective markets even when
the physical venue set is identical, and a lower access cost shifts the
threshold and expands the region in which fragmented liquidity is
executable. This separates two margins of market integration. The extensive
margin is whether another venue becomes worth activating. The intensive
margin is how the order is allocated once several venues are used. These
margins need not move together. Gross gains from fragmented depth may fail to
cover access cost, and measured net opportunity may still fail to appear in a
realized route. Conversely, a router that activates several venues may
allocate the order close to the feasible optimum. Distinguishing these
margins locates the measured friction without treating every unobserved route
as an allocation failure.

We reconstruct exact execution opportunities for 13,768 sampled
family-transactions from 29 Ethereum and Base token-pair families and 71
sibling pools. The data first show a sharp compression from gross to net
opportunity. The equal-chain lower bound on gross multi-venue opportunity is
34.85\%, while the corresponding net upper bound after measured access cost is
6.89\%. The conservative gap is 27.96 percentage points, with a 95\% interval
of 18.19 to 34.23 points. Among point-identified gross-positive states,
measured access cost removes 80.45\%, with an interval of 76.17\% to 88.62\%.

A further compression appears between modeled opportunity and realized venue
activation. Actual multi-pool integration occurs in 1.203\% of the full
opportunity population. Broad clean sibling integration occurs in 0.0446\%,
and strict verified integration in 0.0031\%. These rates do not identify the
content of the residual route-choice wedge; they show that measured
technological and economic opportunity rarely appears as observed multi-venue
use. Yet conditional on eligible activation, allocation is comparatively
efficient. Among 170 eligible actual integrated routes spanning 21 families,
observed allocation captures 94.8\% of aggregate feasible gain. The central
empirical result is therefore not simply that routing is costly. The largest
measured compression occurs at the extensive margin of venue activation,
rather than in allocation conditional on eligible use.

The paper makes three contributions. First, it introduces an order-level
empirical measure of
executable fragmentation in programmable markets, combining exact venue-state
reconstruction with transaction-specific access costs. Venue count,
consolidated depth, and cross-venue price linkage do not identify whether
marginal liquidity is worth accessing for a specific order; the contribution
is the empirical population mass around the routing-cost threshold, not the
threshold algebra itself.

Second, we separate the extensive margin, whether another venue becomes
economically worth activating, from the intensive margin, how efficiently the
order is allocated once multiple venues are activated. The main empirical
compression occurs at the extensive margin: actual multi-pool activation is
rare, while allocation conditional on eligible activation is comparatively
efficient. This is a more specific claim than saying that routers split
orders imperfectly; it locates the larger measured friction at whether a
venue is reached at all, rather than at how well it is used once reached.

Third, using Ethereum and Base cost regimes, we measure how lower
transaction-level access costs shift the executable-integration region while
holding recipient orders and exact venue states fixed. Replacing the Ethereum
cost regime with Base's lower regime raises lower-bound net integration by
20.88 percentage points in Ethereum states and 8.77 points in Base states,
each the weighted mass of orders between two measured cost thresholds. This
is a routing-cost comparative-static exercise, not a causal effect of chain
identity, a welfare estimate, or a claim that Base is a better venue: it is
evidence about how digital-market infrastructure costs map into executable
market integration. An exact-state transport accounting exercise confirms
that the result survives coherent transport of order size and venue design.

The analysis connects to three literatures, led by the one closest to the
setting. Research on AMM design, optimal routing, blockchain scaling, and
decentralized-exchange fragmentation provides the institutional and
computational foundation for the application and characterizes routing with
venue activation costs directly (Angeris et al., 2021, 2022; Adams et al.,
2021; Diamandis et al., 2023; Adams, 2024; Caparros et al., 2024; Lehar et
al., 2024; Lehar \& Parlour, 2025; Escudero et al., 2026). We measure how
often those costs change the economically executable set using
transaction-ordered historical states, and then connect that set to realized
routing. Classical work on market fragmentation and venue competition studies
liquidity externalities, multimarket trading, and market quality across
venues (Pagano, 1989; Chowdhry \& Nanda, 1991; Madhavan, 1995; Malamud \&
Rostek, 2017; O'Hara \& Ye, 2011; Degryse et al., 2015; Chen \& Duffie, 2021).
Research on order routing and trading rules shows that technological links do
not eliminate the effects of access, fees, rebates, and intermediary
incentives (Foucault \& Menkveld, 2008; Kwan et al., 2015; Battalio et al.,
2016; Cont \& Kukanov, 2017). Digital markets recreate this longstanding
fragmentation problem in a new institutional environment, one in which venue
state, execution technology, and transaction-level access costs are unusually
observable; the setting does not overturn the traditional literature but
makes one of its central objects directly measurable.

Section 2 defines executable liquidity and the two routing margins. Section 3
describes the sample, exact-state reconstruction, routing-cost measurement,
and dependence-adjusted inference. Sections 4 and 5 report executable
opportunity and realized routing. Section 6 presents the routing-cost
counterfactual and exact-state validation. Section 7 concludes. Online
Resource 1 reports detailed dependence designs, route-classification audits,
failed clean-route outcome tests, and weighting sensitivities.
Figure~\ref{fig:mechanism} summarizes the empirical sequence.

\begin{figure}[!htbp]
\centering
\includegraphics[width=\linewidth]{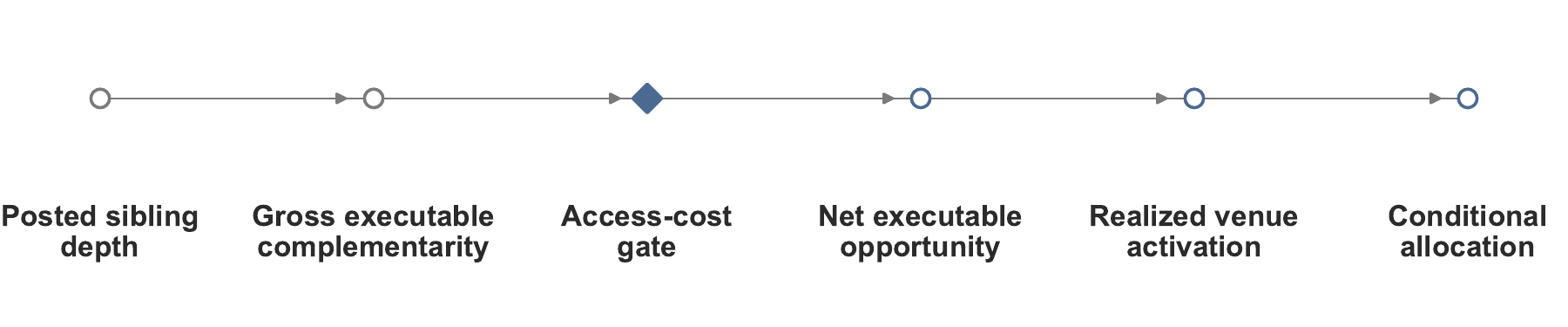}
\caption{From posted depth to realized allocation. Posted sibling liquidity
creates gross executable complementarity only when exact multi-venue execution
improves on the best single venue. The measured routing-cost gate determines
net executable opportunity. Realized venue activation and allocation
conditional on eligible use are separate empirical margins.}
\label{fig:mechanism}
\end{figure}

\FloatBarrier
\section{Executable Liquidity in Fragmented Markets}

\subsection{Sibling Venues and Routing Frictions}

An AMM pool is a venue with a state-dependent execution curve (Angeris et al.,
2021). Separate Uniswap v3 pools can exchange the same pair of assets while
charging different fees and applying fee-specific tick spacing (Adams et al.,
2021). We call these pools siblings. They are economically substitutable
venues, although their execution curves need not coincide.

A transaction may use one sibling or split input across several. Multi-pool
use requires additional contract calls, state access, and transaction data.
Ethereum charges execution gas. Base, an Ethereum layer-2, charges execution
gas plus an L1 data fee for posting transaction data back to Ethereum. We use
the term routing friction descriptively to denote this transaction-level cost
wedge between gross and net executable value; the term does not imply social
waste or a welfare gain from eliminating the cost.

The point of tracking this cost architecture is not that one chain is
cheaper than the other. It is that the technology underlying a digital market
sets the transaction-level price of reaching fragmented liquidity, and that
price is a design choice rather than a fixed feature of fragmentation itself.
A DEX router or liquidity aggregator that treats every economically
substitutable pool as reachable is implicitly assuming this price is low
enough to ignore; layer-2 scaling changes that assumption by moving the
access-cost schedule, not by removing fragmentation across pools. The
Ethereum/Base contrast is therefore informative about how digital-market
infrastructure choices map into effective market integration, independent of
any claim about which chain is preferable.

Public events identify price, active liquidity, initialized ticks, and
transaction order; receipts identify the realized pools and transfer paths.

\subsection{Gross and Net Executable Complementarity}

Let \(\mathcal J=\{1,\ldots,J\}\) denote economically substitutable venues.
An order submits input \(q>0\). Venue \(j\) returns \(V_j(q_j;s_j)\), in a
common pre-transaction numeraire, when it receives \(q_j\) in state \(s_j\).
Let
\begin{equation}
\mathcal Q(q)=\left\{\mathbf q\in\mathbb R_+^J:\sum_jq_j=q\right\},
\qquad S(\mathbf q)=\{j:q_j>0\}.
\end{equation}
For completeness, the executable value of an order under an activation-cost
schedule \(\kappa(S,z)\) is
\begin{equation}
\mathcal W(q,s,z)=\max_{\mathbf q\in\mathcal Q(q)}
\left\{\sum_jV_j(q_j;s_j)-\kappa(S(\mathbf q),z)\right\}.
\label{eq:fixed-charge-operator}
\end{equation}
The empirical application uses a binary incremental cost:
\begin{equation}
\kappa(S,z)=k(z)\1\{|S|\geq2\},\qquad k(z)\geq0,
\label{eq:binary-cost}
\end{equation}
where single-venue execution is the zero-cost benchmark.

Define the best single-venue value, unconstrained gross multi-venue value, and
gross gain as
\begin{equation}
B(q,s)=\max_jV_j(q;s_j),\qquad
I(q,s)=\max_{\mathbf q\in\mathcal Q(q)}\sum_jV_j(q_j;s_j),\qquad
G(q,s)=I(q,s)-B(q,s)\geq0.
\end{equation}

\paragraph{Routing-cost decision rule.}
Under \eqref{eq:binary-cost}, executable value is
\begin{equation}
\mathcal W(q,s,z)=\max\{B(q,s),I(q,s)-k(z)\}.
\label{eq:routing-rule}
\end{equation}
An integrated allocation is strictly preferred exactly when
\(G(q,s)>k(z)\). At equality, no integrated allocation is strictly
preferred. The rule defines the empirical distinction between gross
complementarity, \(G>0\), and net executable complementarity, \(G>k\).

\paragraph{Executable-integration region.}
For a cost regime \(c\) with schedule \(k^c(z)\), define
\begin{equation}
\mathcal I^c=\{(q,s,z):G(q,s)>k^c(z)\}.
\label{eq:executable-region}
\end{equation}
Executable integration is therefore order specific: two orders facing the
same fragmented venue set can lie on opposite sides of the routing-cost
threshold. Under regime \(c\), the corresponding net-integration rate is the
design-weighted empirical mass of order-state-cost realizations inside
\(\mathcal I^c\). When gross gain is bounded, the data identify lower and upper
masses for this region.

For transaction \(i\), the design-weighted incidence rates are
\begin{equation}
\theta^G=\frac{\sum_iw_i\1\{G_i>0\}}{\sum_iw_i},\qquad
\theta^N(z)=\frac{\sum_iw_i\1\{G_i>k_i(z)\}}{\sum_iw_i}.
\end{equation}
When execution identifies only
\(\underline G_i\leq G_i\leq\overline G_i\), replace \(G_i\) in each
indicator by its lower or upper endpoint to obtain
\(\underline\theta^G,\overline\theta^G,\underline\theta^N\), and
\(\overline\theta^N\).

\begin{proposition}[Incidence bounds under interval-valued execution]
\label{prop:sharp-bounds}
If the only transaction-level restriction is
\(\underline G_i\leq G_i\leq\overline G_i\), with \(w_i\geq0\), these four
endpoint rates are sharp bounds on gross and net integration incidence. In
particular,
\begin{equation}
\theta^G-\theta^N(z)\geq
\underline\theta^G-\overline\theta^N(z).
\end{equation}
\end{proposition}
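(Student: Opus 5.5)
The argument rests on monotonicity of the indicators in $G_i$, combined with attainability of the endpoints. Write the identified set as the box $\mathcal G=\prod_i[\underline G_i,\overline G_i]$. For fixed $z$ and $k_i(z)$, the maps $G_i\mapsto\1\{G_i>0\}$ and $G_i\mapsto\1\{G_i>k_i(z)\}$ are nondecreasing. Since $w_i\ge 0$, both $\theta^G$ and $\theta^N(z)$ are nondecreasing in each coordinate of $(G_i)_i$. Hence on $\mathcal G$ each rate is bounded below by its value at the vector $(\underline G_i)_i$ and above by its value at $(\overline G_i)_i$. These values are $\underline\theta^G,\overline\theta^G$ and $\underline\theta^N,\overline\theta^N$, which proves validity.

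Sharpness is immediate because the only restriction is the coordinatewise box, so both extreme points are admissible. The vector $G_i=\underline G_i$ for all $i$ lies in $\mathcal G$ and attains both lower endpoints simultaneously; likewise $G_i=\overline G_i$ attains both upper endpoints. So no tighter constant bound holds for either rate, in the sense that each endpoint is achieved by some data-consistent configuration.

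The main point requiring care is the difference inequality, because $\theta^G$ and $\theta^N$ share the same $G_i$ and cannot be pushed to opposite extremes at once. For a valid inequality this causes no difficulty: for every admissible $G$ we have $\theta^G\ge\underline\theta^G$ and $\theta^N(z)\le\overline\theta^N(z)$, and subtracting gives $\theta^G-\theta^N(z)\ge\underline\theta^G-\overline\theta^N(z)$.

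I would note explicitly that the proposition claims sharpness only for the four marginal rates, not for the difference. I would also add a remark on when the difference bound is itself attained: if $k_i(z)\ge 0$, then $\1\{G_i>0\}-\1\{G_i>k_i\}=\1\{0<G_i\le k_i\}$, whose sharp lower bound is generally tighter. This avoids overclaiming sharpness of the displayed difference bound.
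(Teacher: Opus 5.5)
Your proof is correct and matches the paper's argument: the pointwise sandwich $\1\{\underline G_i>a_i\}\leq\1\{G_i>a_i\}\leq\1\{\overline G_i>a_i\}$ (your coordinatewise monotonicity on the box), weighted summation with $w_i\geq0$, sharpness by taking every lower or every upper endpoint, and subtraction of $\theta^G\geq\underline\theta^G$ and $\theta^N(z)\leq\overline\theta^N(z)$ to get the gap inequality. Your closing remark that the displayed gap bound is valid but need not be sharp is also correct: per observation, the sharp lower bound of $\1\{0<G_i\leq k_i\}$ is $\max\{0,\1\{\underline G_i>0\}-\1\{\overline G_i>k_i\}\}$, a refinement the paper does not state.
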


The bounds preserve deterministic reconstruction uncertainty rather than
dropping unresolved executions. Sampling uncertainty is handled separately.

\subsection{Venue Activation and Conditional Allocation}

The extensive margin is whether a transaction activates more than one
economically substitutable sibling venue. The intensive margin is how
efficiently it allocates input conditional on eligible observed multi-venue
use. Broad and strict clean routes are nested, higher-specificity subsets of
the population route census. The efficiency sample is a separate conditional
sample spanning route objectives; it is not nested within the clean-route
subsets. Figure~\ref{fig:universes} makes these denominators explicit.

\FloatBarrier
\begin{figure}[!htbp]
\centering
\includegraphics[width=\linewidth]{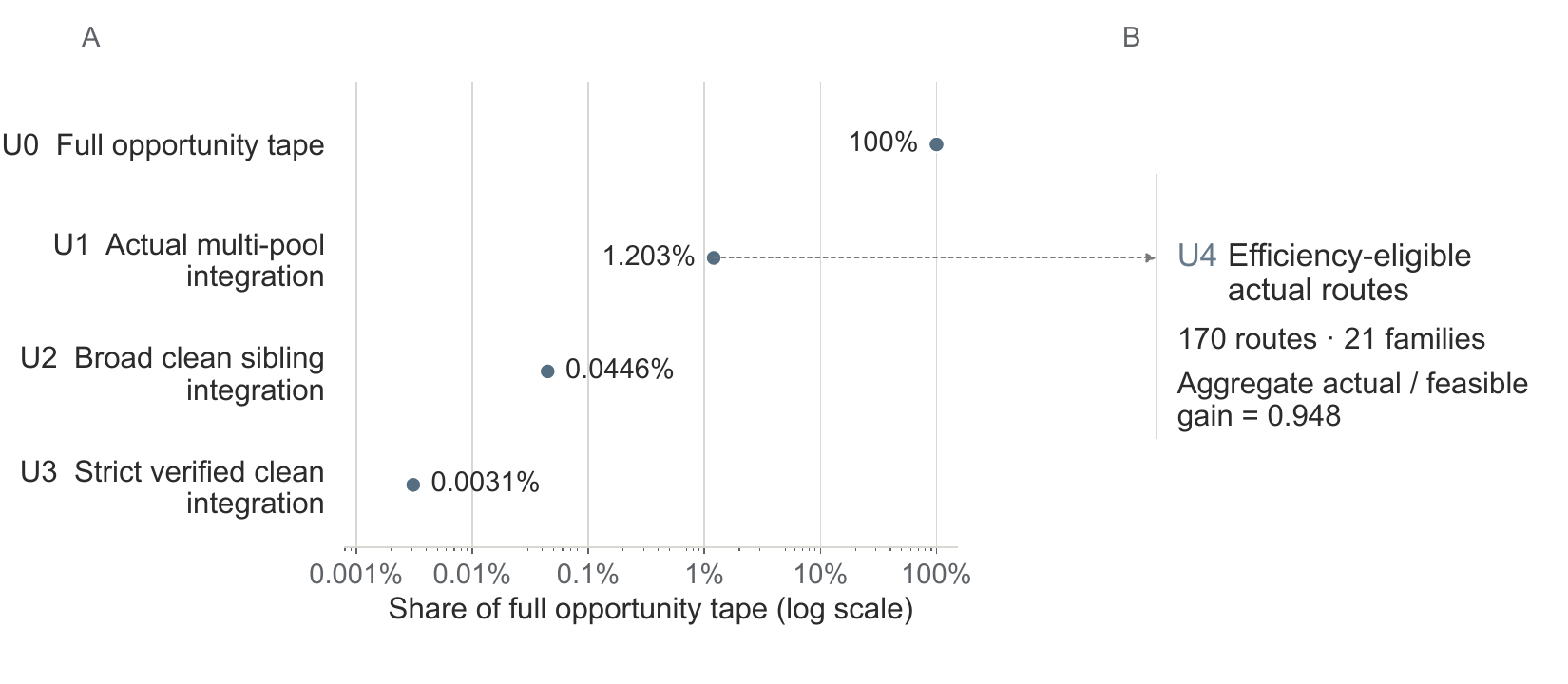}
\captionsetup{labelsep=period}
\caption{Empirical universes. Panel A reports population route rates using the
full 13,768-transaction opportunity tape. Panel B reports allocation
efficiency for a separate conditional sample of 170 eligible actual integrated
routes spanning 21 families. The efficiency sample branches from U1 across
route objectives and is not nested within U2 or U3.}
\label{fig:universes}
\end{figure}

\FloatBarrier
\section{Data and Exact Execution Measurement}

\subsection{Sample and Family Construction}

The analysis covers 29 prespecified token-pair families on Ethereum and Base
from January 2024 through December 2025. A family contains economically linked
sibling pools for one token pair; a family-transaction is a sampled
transaction evaluated against that family's executable siblings. The sample
contains 7,176 Ethereum and 6,592 Base transactions. Deterministic hash
sampling gives every included transaction a known inclusion probability.

Eligibility is fixed using token semantics, sustained activity, and
sibling-pool availability. Gross gain, net gain, realized routes,
convergence, and markouts do not enter selection. Population estimates retain
inverse-inclusion weights. Across 71 pools, 12,817 transactions are point
identified, 917 are economic no-overlap zeros, and 34 retain deterministic
bounds. Economic zeros and bounded rows remain in the reported denominators.
Table~\ref{tab:summary} reports sample construction and the main estimates.

\begin{table}[!htbp]
\centering
\caption{Sample and Main Estimates}
\label{tab:summary}
\small
\begin{tabular}{@{}p{0.59\linewidth}r@{}}
\toprule
Measure & Estimate \\
\midrule
\multicolumn{2}{@{}l}{\textit{Panel A. Sample}} \\
Token-pair families & 29 (15 Ethereum; 14 Base) \\
Sibling pools & 71 \\
Sampled family-transactions & 13,768 (7,176 Ethereum; 6,592 Base) \\
Point identified & 12,817 \\
No-overlap economic zeros & 917 \\
Bounded & 34 \\
\addlinespace
\multicolumn{2}{@{}l}{\textit{Panel B. Opportunity and activation}} \\
Gross-complementary share, lower bound & 34.85\% \\
Net-complementary share, upper bound & 6.89\% \\
Conservative gross--net gap & 27.96 pp \\
Gross-positive states removed by access cost & 80.45\% \\
Actual multi-pool use & 1.203\% \\
Broad clean sibling use & 0.0446\% \\
Strict verified use & 0.0031\% \\
\addlinespace
\multicolumn{2}{@{}l}{\textit{Panel C. Conditional allocation}} \\
Eligible actual integrated routes & 170 \\
Families & 21 \\
Design-weighted route efficiency & 0.912 \\
Aggregate actual/feasible gain & 0.948 \\
Under-routing rate & 20.96\% \\
\bottomrule
\end{tabular}

\parbox{0.96\linewidth}{\footnotesize \textit{Note:} Panel A reports counts on the full prespecified opportunity tape. Panel B uses that tape as the denominator. Gross lower and net upper are conservative bounds; removal is measured among point-identified gross-positive states. Panel C is conditional on 170 eligible actual integrated routes across 21 families. Route efficiency is design weighted, aggregate actual/feasible is the ratio of aggregate gains. Under-routing denotes an observed allocation whose feasible gross-gain shortfall exceeds the frozen tolerance, $F_i-A_i>\tau_i$.}

\end{table}

\subsection{Transaction-Ordered Venue States}

Within each block, events are ordered by transaction index and log index. The
pre-state for transaction \(i\) contains all earlier state changes in the
block and no later changes. For each executable sibling, the reconstruction
retains price, liquidity, fee, tick, and initialized-tick data.

The execution engine simulates exact execution across initialized ticks and
searches over feasible allocations. It reports best single-pool output,
multi-pool output, pool allocations, and tick crossings. Observed routes are
replayed against receipt transfers. Deterministic measurement failures enter
lower and upper bounds rather than being silently deleted.

\subsection{Routing-Cost Measurement}

Incremental execution gas is estimated on receipt-validated one- and
multi-sibling routes. The frozen specification yields a transaction- and
date-specific full-router cost schedule using observed gas prices and
fee-regime parameters rather than a constant cost level. Base additionally
includes incremental calldata and applicable L1 cost. ETH/USD and gas prices
are determined before the focal execution outcome is evaluated. The schedule
measures the incremental on-chain cost of activating multiple siblings; it
does not measure every private or institutional routing friction.

\subsection{Dependence-Adjusted Inference}

The primary procedure resamples families within chain and circular contiguous
eight-week calendar blocks within family. It redraws gas coefficients,
ETH/USD, gas prices, and Base calldata/L1 parameters while preserving
transaction weights and deterministic execution bounds. Headline opportunity
estimates use 5,000 valid draws per chain. Route-rate, eligible-efficiency,
and realized-outcome intervals retain family, calendar-date, exact-block, and
family-block dependence, with wild-family inference or descriptive-only
reporting for small-family cells. Alternative block designs are sensitivity
checks. Transaction-iid inference is diagnostic only.

\section{Routing Frictions and Executable Integration}

\subsection{Gross versus Net Opportunity}

The equal-chain lower bound on gross-complementary states is 34.85\%. After
subtracting the frozen routing-cost schedule, the upper bound on net
complementarity is 6.89\%. The conservative gross-lower-minus-net-upper gap is
27.96 percentage points, with a primary interval of 18.19 to 34.23 points.
Thus posted sibling depth often creates a gross execution improvement, but
far fewer orders can cover the measured cost of reaching that depth.

\subsection{Threshold-Band Removal}

Let \(\mathcal P\) be the point-identified set. The weighted share of
gross-positive states removed by measured access cost is
\begin{equation}
\rho=
\frac{\sum_{i\in\mathcal P}w_i\1\{0<G_i\leq k_i\}}
{\sum_{i\in\mathcal P}w_i\1\{G_i>0\}}
=1-\frac{\sum_{i\in\mathcal P}w_i\1\{G_i>k_i\}}
{\sum_{i\in\mathcal P}w_i\1\{G_i>0\}}.
\end{equation}
Measured access cost removes 80.45\% of these states, with a primary interval
of 76.17\% to 88.62\%. The 34 bounded rows remain in the separate gross and
net incidence bounds rather than being assigned a removal status.

\subsection{Cross-Family Evidence}

Figure~\ref{fig:family-gross-net} reports gross and net incidence bounds by
selected family.

\begin{figure}[!htbp]
\centering
\includegraphics[width=\linewidth]{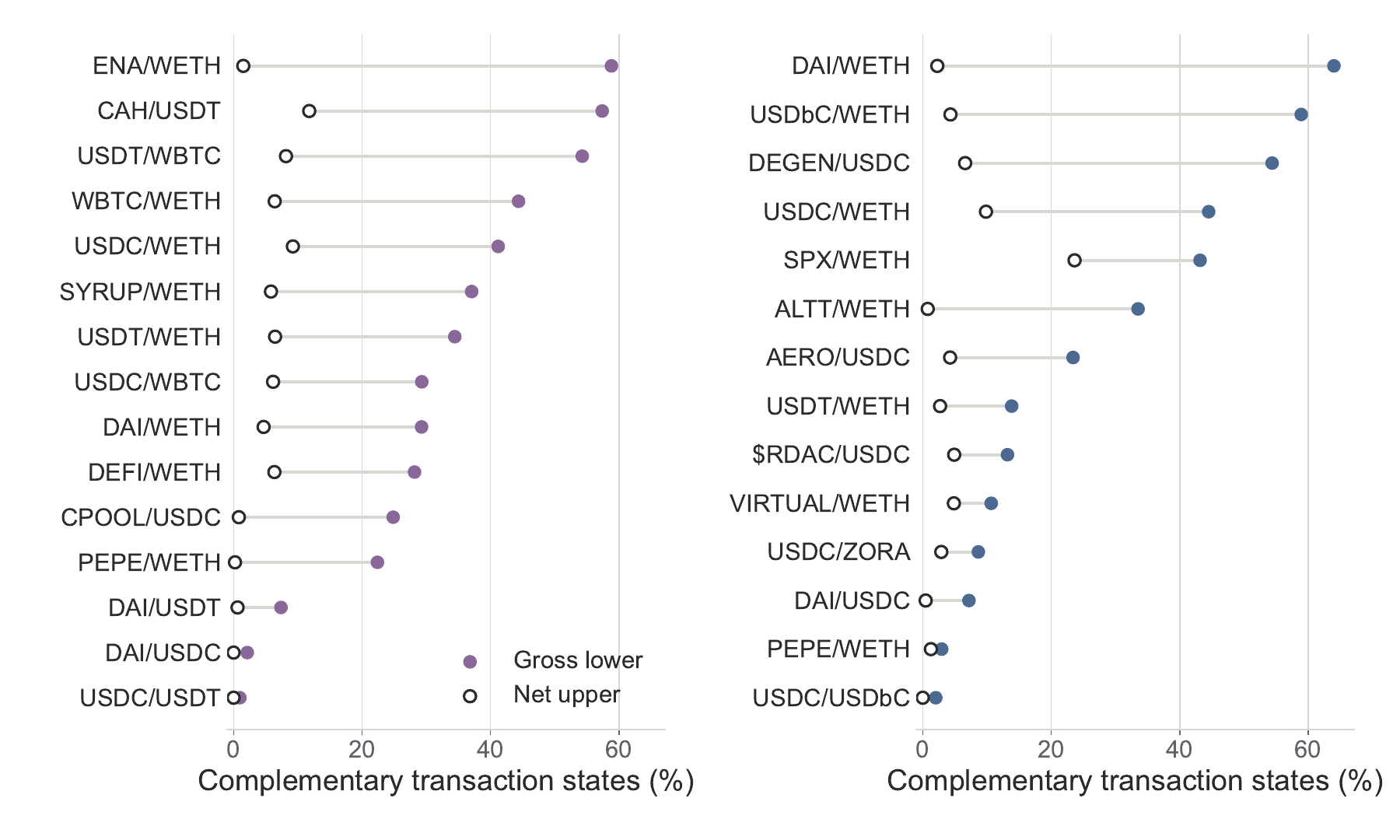}
\caption{Gross versus net integration across families. The left panel reports
Ethereum and the right panel Base. Filled markers show the design-weighted
lower bound on gross-complementary states; hollow markers show the upper bound
after subtracting the central routing-cost schedule. Economic no-overlap
states remain zero and bounded execution rows remain in the reported bounds.}
\label{fig:family-gross-net}
\end{figure}

USDC--WETH is economically important but does not carry the result. Removing
that family on both chains yields an 83.54\% gross-state removal share and a
75.23\% to 90.96\% interval in the dedicated eight-week exclusion bootstrap.
The exclusion result is not described as passing block-length sensitivities
that were not run.

These estimates locate an economically important friction before realized
route choice. Positive modeled net gain is an order-level execution
opportunity; it is not realized trader surplus and does not identify welfare.

\section{Realized Routing}

\subsection{Route Census}

The route taxonomy distinguishes clean sibling splits from aggregator
execution, arbitrage or backruns, protocol operations, and complex
multi-objective transactions. A broad clean split uses only economically
linked sibling pools. The strict measure additionally requires verified
router evidence. Labels are fixed before realized-outcome analysis and use
decoded calldata, receipt logs, call traces, token-flow graphs, and transfer
reconciliation rather than Swap-log count alone.

\subsection{Population Venue-Activation Rates}

Population rates use the full 13,768-transaction opportunity tape. Actual
multi-pool integration occurs in 1.203\% of that population. Broad clean
sibling integration is a nested route class and occurs in 0.0446\%; strict
verified integration occurs in 0.0031\%. The broad clean two-way family/date
interval is 0 to 0.0982\%. These are different route-taxonomy thresholds
against the same population denominator.

The gap between net opportunity and realized activation can contain search
cost, command constraints, urgency, rebates, bundling, slippage protection,
private routing value, trader heterogeneity, and measurement error. The
current design does not identify those components or the full latent
route-choice wedge.

\subsection{Conditional Allocation Efficiency}

For a transfer-reconciled observed allocation
\(\mathbf q_i^{\mathrm{obs}}\), define observed and feasible gross gains
relative to the best single venue as
\begin{equation}
A_i=\sum_jV_j(q_{ij}^{\mathrm{obs}};s_{ij})-B_i,
\qquad
F_i=\max_{\mathbf q\in\mathcal Q(q_i)}\sum_jV_j(q_j;s_{ij})-B_i=G_i.
\end{equation}
Both use the same input and pre-state. The same binary access cost \(k_i\) is
subtracted from observed and feasible routes. Let
\begin{equation}
\tau_i=\max\{0.01\ \mathrm{USD},\;10^{-5}\times
\text{input value}_i\}.
\end{equation}
The tolerance prevents rounding-scale or economically negligible deviations
from being classified as under-routing.
For a point-identified, transfer-reconciled actual integrated route satisfying
\(F_i-k_i\geq\tau_i\), route efficiency is
\begin{equation}
e_i=\frac{A_i-k_i}{F_i-k_i}.
\end{equation}
The design-weighted estimate averages \(e_i\). The aggregate ratio divides
\(\sum_iw_i(A_i-k_i)\) by \(\sum_iw_i(F_i-k_i)\). Under common input, state,
and access cost, \(A_i\leq F_i\), so \(e_i\leq1\) when the denominator is
positive. Values above one are retained only as reconstruction diagnostics.

For the eligible route set \(\mathcal E\), under-routing is
\begin{equation}
U_i=\1\{F_i-A_i>\tau_i\},\qquad
U=\frac{\sum_{i\in\mathcal E}w_iU_i}
{\sum_{i\in\mathcal E}w_i}.
\end{equation}
Thus under-routing denotes a feasible gross-gain shortfall exceeding the
frozen tolerance; it never compares the dimensionless ratio \(e_i\) with the
dollar threshold \(\tau_i\).

Among 170 eligible routes in 21 families, design-weighted route efficiency is
0.912, with a family/date interval of 0.887 to 0.938. Aggregate actual gain is
0.948 of aggregate feasible gain, with an interval of 0.900 to 0.995. The
under-routing rate is 20.96\%. The two statistics answer different questions:
under-routing counts routes with a shortfall above \(\tau_i\), whereas the
aggregate ratio weights each route by its feasible net gain. These conditional
allocation results use a different universe from the clean-route population
rates.

Read together with the population activation rates in Section 5.2, this
result places the larger measured wedge before allocation: routes that clear
the extensive margin are, on average, allocated close to the feasible
optimum, so the dominant compression in the data lies in whether a venue is
activated at all rather than in how well it is used once activated. For
digital routing infrastructure, this suggests that improving post-activation
split optimization has limited scope to close the gap documented in Section
5.2, whose components remain unidentified in this design.

\subsection{Post-Trade Outcomes as Interpretation Boundaries}

Pooled all-route family-relative markouts are negative at one and five blocks
and remain a secondary descriptive fact. No universal clean-route convergence
or markout sign survives the relevant dependence adjustment; Base clean-route
convergence has only three contributing families. The Supplementary Appendix
reports these tests. We therefore do not use post-trade outcomes to strengthen the
venue-activation result.

\section{Routing-Cost Counterfactual Across Market Environments}

This section treats the routing-cost regime as a comparative-static object.
Holding the recipient order and the exact pre-trade venue state fixed, we ask
how much of the executable-integration region changes when only the cost of
activating additional venues changes. This is a stronger empirical exercise
than a cross-sectional Ethereum--Base correlation, because order and
venue-state composition cannot mechanically drive the result.

\subsection{Common Support}

Raw Ethereum--Base differences combine routing cost, order flow, venue states,
and fee design. The counterfactual instead changes the measured routing-cost
regime while keeping the recipient order and exact venue state fixed. An
exact-stratum design has limited common support: it retains eight Ethereum and
74 Base observations from two families per chain, with maximum normalized
weights above 14\%. We therefore define the comparison population using
outcome-blind propensity-overlap weights based on pair type, economic
direction, quarter, order size, active virtual liquidity, sibling dominance,
executable sibling count, pre-state volatility, and missingness. The resulting
population retains all 29 families. Effective sample sizes are 909 on Ethereum
and 309 on Base, maximum normalized weights are 1.14\% and 1.79\%, and the
largest absolute standardized covariate difference is 0.0075. For comparison,
the Supplementary Appendix describes the exact-stratum design, and Table S1
reports weighting sensitivities.

The exact order-size and venue-design transport requires recipient orders to
be executable on complete coherent donor states. Its narrower validation
population contains 3,478 Ethereum and 3,497 Base target transactions and
retains every family. The largest absolute standardized difference is 0.0214.

\subsection{Cost-Regime Transport}

For target chain \(t\in\{E,B\}\), let \(\mathcal O_t\) be the overlap
population and let \(c\in\{E,B\}\) index the routing-cost regime. Under the
prespecified design and overlap weights \(w_{it}\), lower-bound net integration is
\begin{equation}
\underline\Theta_{t,c}=
\frac{\sum_{i\in\mathcal O_t}w_{it}
\1\{\underline G_{it}>k_{it}^{\,c}\}}
{\sum_{i\in\mathcal O_t}w_{it}}.
\label{eq:cost-matrix}
\end{equation}
Target order, exact state, fee tiers, gross-gain bounds, and weights do not
change across columns. Missing donor dates are matched only within the same
calendar week.

\begin{proposition}[Ordered-cost threshold mass]
\label{prop:threshold-mass}
For \(k_i^L\leq k_i^H\) in a fixed weighted population, holding orders and
venue states fixed,
\begin{equation}
\underline\Theta_t(k^L)-\underline\Theta_t(k^H)
=\frac{\sum_{i\in\mathcal O_t}w_{it}
\1\{k_{it}^L<\underline G_{it}\leq k_{it}^H\}}
{\sum_{i\in\mathcal O_t}w_{it}}.
\end{equation}
\end{proposition}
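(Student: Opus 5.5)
The argument is a pointwise indicator identity followed by linearity of the weighted mean. From \eqref{eq:cost-matrix}, write \(\underline\Theta_t(k)\) as the weighted average of \(\1\{\underline G_{it}>k_{it}\}\) over \(\mathcal O_t\). The weights \(w_{it}\), the population \(\mathcal O_t\), and the gross-gain lower bounds \(\underline G_{it}\) do not depend on the cost regime; this is what "holding orders and venue states fixed" provides. The two rates therefore share the same normalizing denominator \(\sum_{i\in\mathcal O_t}w_{it}\). Their difference is a single weighted sum of differences of indicators, divided by that common denominator.

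The key step is the pointwise identity
\begin{equation*}
\1\{\underline G_{it}>k_{it}^L\}-\1\{\underline G_{it}>k_{it}^H\}
=\1\{k_{it}^L<\underline G_{it}\leq k_{it}^H\},
\end{equation*}
which holds for each \(i\) whenever \(k_{it}^L\leq k_{it}^H\). To verify it, note that \(k^L\leq k^H\) gives the event inclusion \(\{\underline G>k^H\}\subseteq\{\underline G>k^L\}\). The difference of indicators of nested events is the indicator of the set difference, \(\{\underline G>k^L\}\setminus\{\underline G>k^H\}=\{k^L<\underline G\leq k^H\}\). To be fully explicit one can check three cases: \(\underline G\leq k^L\) (both sides equal \(0\)), \(k^L<\underline G\leq k^H\) (both equal \(1\)), and \(\underline G>k^H\) (both equal \(0\)). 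Multiplying by \(w_{it}\), summing over \(\mathcal O_t\), and dividing by the common denominator gives the stated formula.

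No step is a genuine obstacle, but one point needs care. The identity needs the ordering \(k_{it}^L\leq k_{it}^H\) observation by observation, not merely on average across the population. For the Ethereum--Base application this means the transported schedules must be ordered transaction by transaction. Where they are not, the same argument yields a signed decomposition: the difference equals the weighted mass in the band \((k^L,k^H]\) minus the mass in the reversed band \((k^H,k^L]\).

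The argument also requires the denominator \(\sum_{i\in\mathcal O_t}w_{it}\) to be positive. This holds because the \(w_{it}\) are nonnegative and the overlap population has positive effective size.

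Two consequences follow immediately. Since the right-hand side is nonnegative, \(\underline\Theta_t\) is monotone nonincreasing in the cost schedule. And because the identity is pointwise, it applies equally with \(\overline G_{it}\) in place of \(\underline G_{it}\), so the same result holds for the upper-bound rates.
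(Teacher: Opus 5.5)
Your proof is correct and is the paper's argument: the pointwise identity \(\1\{\underline G_i>k_i^L\}-\1\{\underline G_i>k_i^H\}=\1\{k_i^L<\underline G_i\leq k_i^H\}\) under \(k_i^L\leq k_i^H\), then multiplying by the fixed weights, summing over the fixed population, and dividing by the common total weight. Your extra remarks are also correct refinements that the paper leaves implicit: the ordering must hold observation by observation, a signed decomposition applies when it does not, and the denominator must be positive.
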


On matched support, the Base cost is no greater than the Ethereum cost for
each transaction. Restricted to that support, every realization classified as
executable under the higher-cost regime is also classified as executable
under the lower-cost regime. A lower routing-cost schedule therefore weakly
expands the executable-integration region on the comparison support.
Proposition~\ref{prop:threshold-mass} measures the design-weighted lower-bound
empirical mass reclassified as net executable when the higher schedule is
replaced by the lower schedule while orders and venue states are held fixed.
The result requires no parametric distribution for gains.
Figure~\ref{fig:cost-counterfactual} reports the contrast.

\begin{figure}[!htbp]
\centering
\makebox[\linewidth][c]{%
\includegraphics[width=1.12\linewidth]{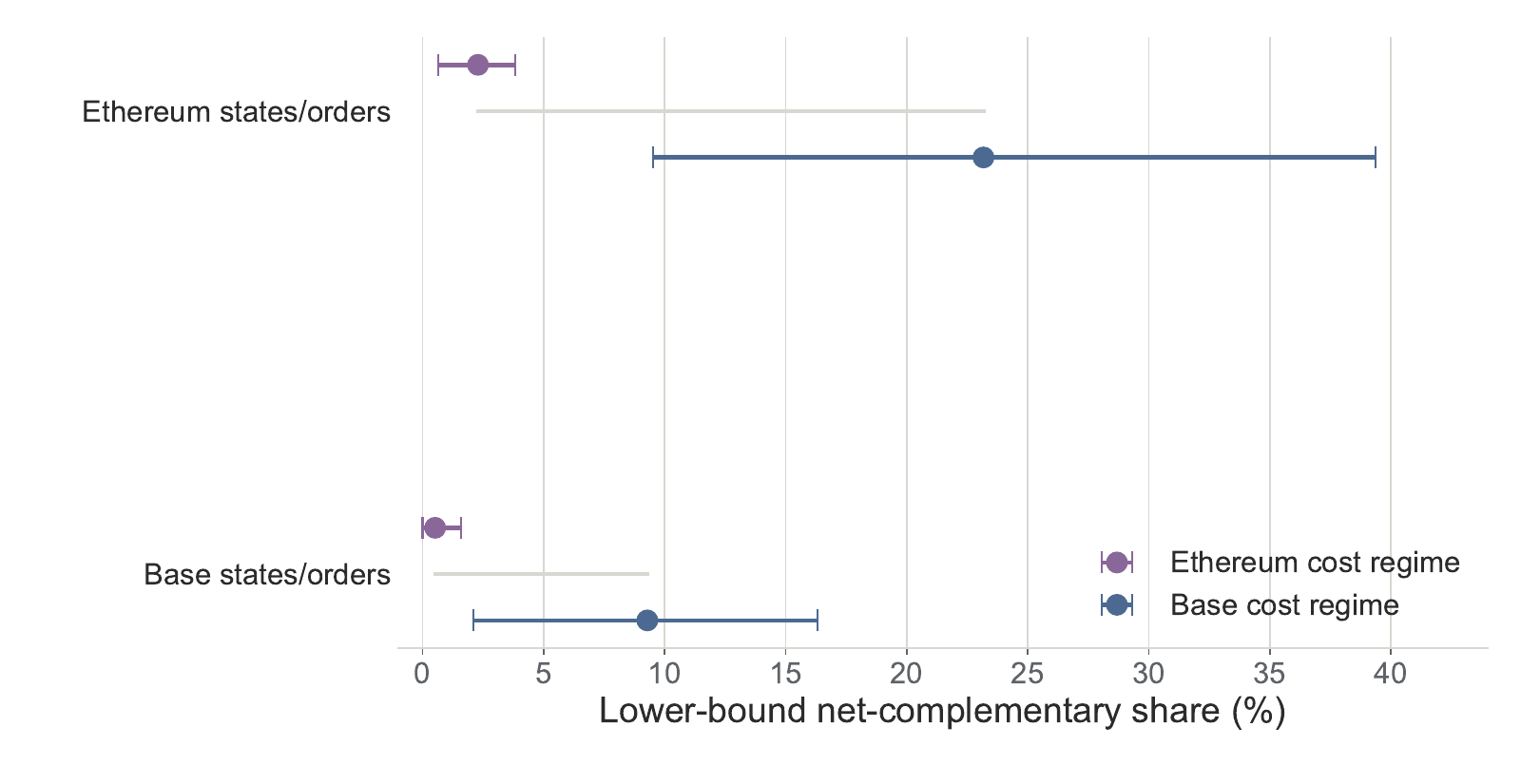}}
\caption{Routing-cost counterfactual on balanced support. Within each state
and order population, the two markers change only the routing-cost regime.
Lines are 95\% intervals from 5,000 family and circular contiguous eight-week
common-shock draws. The comparison is not a chain treatment effect.}
\label{fig:cost-counterfactual}
\end{figure}

Under Ethereum states and orders, replacing the Ethereum cost regime with
Base's raises lower-bound net integration from 2.29\% to 23.17\%, a difference
of 20.88 percentage points with a 95\% interval of 7.81 to 38.04 points. Under
Base states and orders, the same replacement raises the share from 0.52\% to
9.28\%, a difference of 8.77 points with an interval of 1.84 to 16.05 points.
Both lower endpoints remain positive under all six alternative block designs.

\subsection{Exact-State Validation}

Recovered histories cover 71 pools and 184,699,451 ordered events. A
checkpoint-plus-delta cache reconstructs 33,551 marker-pool states for all
13,768 transaction markers. Every frozen transaction reproduces with zero
gross-classification mismatch. Exact counterfactual execution retains bounds
for 61 point-unidentified executions rather than dropping them or coding them
as zero. The maximum transport-decomposition adding-up error is
\(2.84\times10^{-14}\).

These checks rule out proportional output scaling, local tick approximation,
summary-state interpolation, incompatible fee--liquidity hybrids, and
optimizer approximation as explanations for the cost result.

\subsection{K/Q/D Accounting}

The validation transports routing cost \(K\), order-size distribution \(Q\),
and joint venue design \(D\). The order-size map preserves USD size, economic
direction, pair type, and common-support rank before exact re-execution. The
venue-design map moves the complete donor execution object: prices, active
liquidity, initialized-tick books, liquidity placement, fee tiers, tick
spacing, sibling count, and dominance. Each transported coalition is
reoptimized. The Appendix defines the full Shapley accounting operator and
its exact adding-up property.

Figure~\ref{fig:kqd} reports the contributions. It is a validation and
accounting exercise rather than the paper's primary cross-market estimand.

\begin{figure}[!htbp]
\centering
\makebox[\linewidth][c]{%
\includegraphics[width=1.12\linewidth]{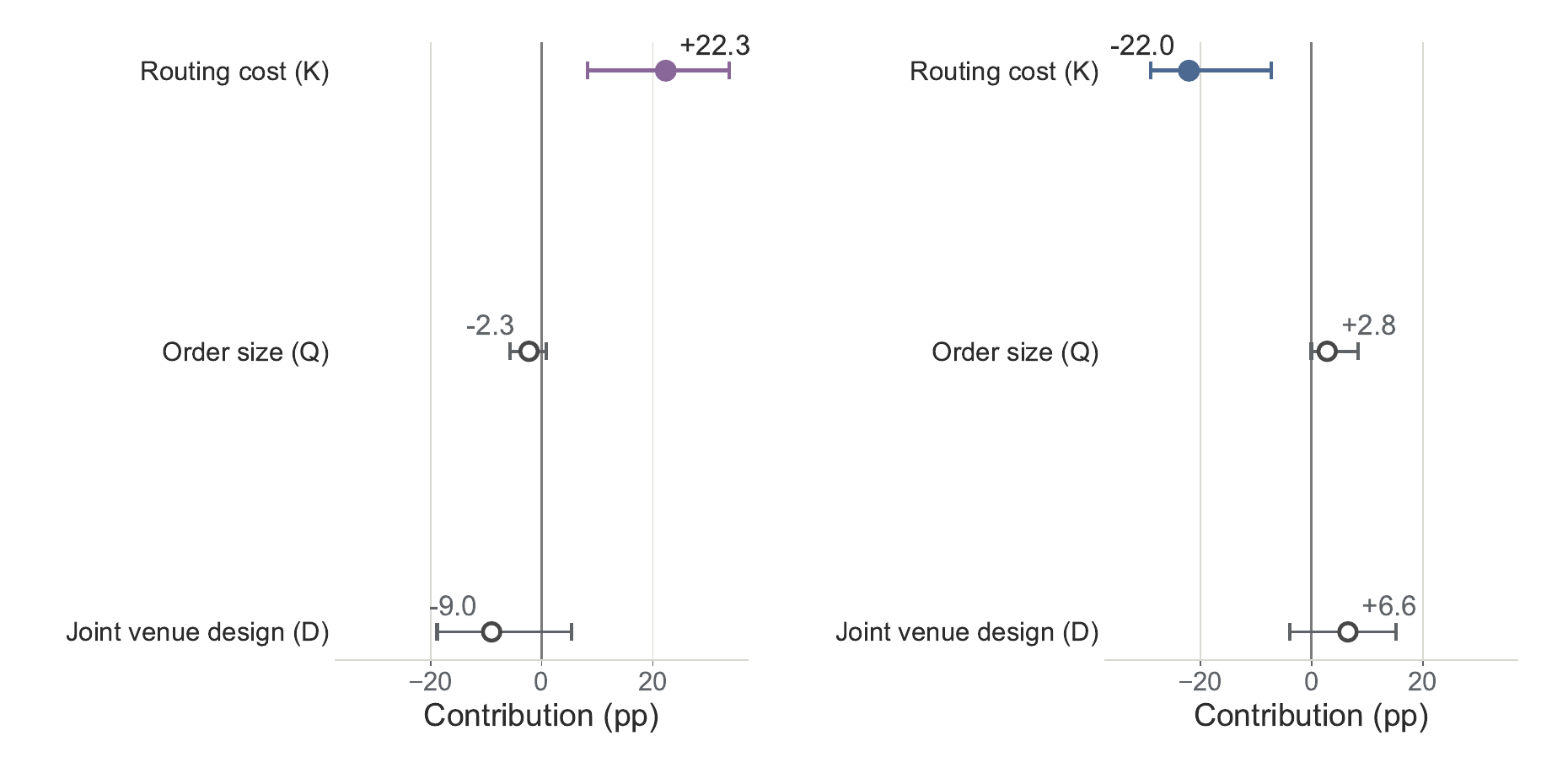}}
\caption{Exact K/Q/D accounting on coherent transport support. The left panel
targets Ethereum and the right panel targets Base. Filled markers show the
dependence-robust K terms; hollow markers show Q and D descriptive accounting
terms. Intervals use 5,000 family and circular contiguous eight-week
common-shock draws. The zero line marks no contribution; the figure does not
estimate a causal effect of chain identity.}
\label{fig:kqd}
\end{figure}

For the Ethereum target, \(K\) contributes \(+22.30\) percentage points, with
an interval of \([8.26,33.70]\); \(Q\) contributes \(-2.29\) points and \(D\)
\(-9.01\) points. For the Base target, \(K\) contributes \(-22.03\) points,
with an interval of \([-28.96,-7.23]\); \(Q\) contributes \(+2.83\) points
and \(D\) \(+6.56\) points. Only \(K\) supports a dependence-robust standalone
claim. The Q and D intervals and sensitivity patterns do not support stable
standalone attribution. The opposite signs reflect transport direction:
Base's lower cost expands the Ethereum target, whereas Ethereum's higher cost
contracts the Base target.

Fee tier, tick spacing, and liquidity placement remain a joint venue-design
object. A separate fee-only hybrid would require an economically identified
pool assignment that the cross-chain data do not supply. A related
protocol-fee design can identify LP-supply responses when trader-facing fees
are unchanged (Wang, 2026), but only 24 of 6,975 observations here pass the
mechanical fee-tier screen. We therefore do not introduce a fourth transported
component.

\section{Conclusion}

Fragmented markets can be technologically connected while remaining
economically segmented for individual orders. In the sibling-pool setting,
venues are simultaneously visible and mechanically reachable, yet measured
access costs remove about four-fifths of point-identified gross-positive
states. Actual multi-pool activation occurs in only 1.203\% of the
opportunity population, while in a separate conditional sample of eligible
realized routes, observed allocation captures 94.8\% of aggregate feasible
gain. The evidence places the largest measured compression before
allocation, at venue activation.

The framework interprets these results through an order-specific
executable-integration region. A routing-cost regime partitions order-state
realizations according to whether the gross gain from reaching additional
displayed liquidity exceeds its incremental access cost. Aggregate net
integration summarizes the empirical mass inside that region rather than
describing a single effective market faced uniformly by every order. Holding
recipient orders and exact venue states fixed, applying Base's lower cost
regime expands lower-bound net integration under both state populations; this
is a comparative statement about the executable set rather than a causal
claim about chain identity.

Three implications follow for digital-market design. Blockchain transparency
and composability can make venues technologically connected without making
them economically integrated for every order: a fragmented market can look
integrated in aggregate while presenting different effective markets to
different orders. Scaling that lowers transaction-level access cost can
change the set of liquidity that is economically executable, not merely
reduce an average transaction fee; the routing-cost comparative statics in
Section 6 quantify this as a shift in the executable-integration region
rather than a uniform saving. And router design should be evaluated
separately at venue activation and at conditional allocation. In our setting,
allocation is comparatively efficient once eligible multi-venue routes are
realized, while activation is rare, so improving split optimization after
venues are activated has limited scope to address the larger measured
compression, whose remaining components are not identified here.

Venue availability, consolidated depth, and price linkage describe
properties of a fragmented market, but they do not determine which liquidity
remains economically reachable for a particular order. Digital-market
integration is, in this sense, an order-specific economic object rather than
merely a property of the venue network.

\clearpage
\section*{Statements and Declarations}
\addcontentsline{toc}{section}{Statements and Declarations}

\subsection*{Funding}
Wen-Ting Wang received funding from the National Science and Technology
Council, Taiwan (Grant ID NSTC 113-2118-M005-005-MY2).

\subsection*{Competing Interests}
The author declares no known competing financial interests or personal
relationships that could have appeared to influence the work reported in
this paper.

\subsection*{Data Availability}
The raw data analysed in this study are publicly available Ethereum and Base
blockchain records, accessible via public node providers and block explorers
(e.g., Etherscan, Basescan). The processed transaction tape and reconstructed
pool-state histories generated during the current study are available from
the corresponding author upon reasonable request.

\subsection*{Generative AI and AI-Assisted Technologies in Manuscript
Preparation}
During the preparation of this work, the author used OpenAI Codex to assist
with manuscript organization, language editing, LaTeX preparation, code-level
formatting of data visualizations, and submission-compliance checks. Empirical
results and specifications were produced by author-controlled code, and all
figures were rendered deterministically from frozen numerical inputs using
author-reviewed Matplotlib scripts; no generative image model was used. After
using this service, the author reviewed and edited the content as needed and
takes full responsibility for the content of the article.

\clearpage
\section*{References}
\addcontentsline{toc}{section}{References}
\begingroup
\setlength{\parindent}{0pt}
\setlength{\parskip}{0.7\baselineskip}
\hangindent=1.5em
\hangafter=1
Adams, A. (2024). Layer 2 be or layer not 2 be: Scaling on Uniswap v3.
arXiv preprint arXiv:2403.09494.
\url{https://doi.org/10.48550/arXiv.2403.09494}.

\hangindent=1.5em
\hangafter=1
Adams, H., Zinsmeister, N., Salem, M., Keefer, R., \& Robinson, D. (2021).
Uniswap v3 core [Technical report].

\hangindent=1.5em
\hangafter=1
Angeris, G., Chiang, R., Chitra, T., Kao, H.-T., \& Noyes, C. (2021). An
analysis of Uniswap markets. \textit{Cryptoeconomic Systems}.

\hangindent=1.5em
\hangafter=1
Angeris, G., Evans, A., Chitra, T., \& Boyd, S. (2022). Optimal routing for
constant function market makers. In \textit{Proceedings of the 23rd ACM
Conference on Economics and Computation} (pp. 115--128).

\hangindent=1.5em
\hangafter=1
Battalio, R., Corwin, S. A., \& Jennings, R. (2016). Can brokers have it all?
On the relation between make-take fees and limit order execution quality.
\textit{Journal of Finance, 71}, 2193--2238.

\hangindent=1.5em
\hangafter=1
Caparros, B., Chaudhary, A., \& Klein, O. (2024). Blockchain scaling and
liquidity concentration on decentralized exchanges. arXiv preprint
arXiv:2306.17742. \url{https://doi.org/10.48550/arXiv.2306.17742}.

\hangindent=1.5em
\hangafter=1
Chen, D., \& Duffie, D. (2021). Market fragmentation. \textit{American
Economic Review, 111}, 2247--2274.

\hangindent=1.5em
\hangafter=1
Chowdhry, B., \& Nanda, V. (1991). Multimarket trading and market liquidity.
\textit{Review of Financial Studies, 4}, 483--511.

\hangindent=1.5em
\hangafter=1
Cont, R., \& Kukanov, A. (2017). Optimal order placement in limit order
markets. \textit{Quantitative Finance, 17}, 21--39.

\hangindent=1.5em
\hangafter=1
Degryse, H., de Jong, F., \& van Kervel, V. (2015). The impact of dark trading
and visible fragmentation on market quality. \textit{Review of Finance, 19},
1587--1622.

\hangindent=1.5em
\hangafter=1
Diamandis, T., Resnick, M., Chitra, T., \& Angeris, G. (2023). An efficient
algorithm for optimal routing through constant function market makers. In F.
Baldimtsi \& C. Cachin (Eds.), \textit{Financial cryptography and data
security} (Lecture Notes in Computer Science, Vol. 13951, pp. 128--145).
Springer. \url{https://doi.org/10.1007/978-3-031-47751-5_8}.

\hangindent=1.5em
\hangafter=1
Escudero, C., Lara, F., \& Sama, M. (2026). Optimal routing across constant
function market makers with gas fees. arXiv preprint arXiv:2603.02844.
\url{https://doi.org/10.48550/arXiv.2603.02844}.

\hangindent=1.5em
\hangafter=1
Foucault, T., \& Menkveld, A. J. (2008). Competition for order flow and smart
order routing systems. \textit{Journal of Finance, 63}, 119--158.

\hangindent=1.5em
\hangafter=1
Kwan, A., Masulis, R. W., \& McInish, T. H. (2015). Trading rules, competition
for order flow and market fragmentation. \textit{Journal of Financial
Economics, 115}, 330--348.

\hangindent=1.5em
\hangafter=1
Lehar, A., \& Parlour, C. A. (2025). Decentralized exchange: The Uniswap
automated market maker. \textit{Journal of Finance, 80}, 321--374.

\hangindent=1.5em
\hangafter=1
Lehar, A., Parlour, C. A., \& Zoican, M. (2024). Fragmentation and optimal
liquidity supply on decentralized exchanges. arXiv preprint arXiv:2307.13772.
\url{https://doi.org/10.48550/arXiv.2307.13772}.

\hangindent=1.5em
\hangafter=1
Madhavan, A. (1995). Consolidation, fragmentation, and the disclosure of
trading information. \textit{Review of Financial Studies, 8}, 579--603.

\hangindent=1.5em
\hangafter=1
Malamud, S., \& Rostek, M. (2017). Decentralized exchange.
\textit{American Economic Review, 107}, 3320--3362.

\hangindent=1.5em
\hangafter=1
O'Hara, M., \& Ye, M. (2011). Is market fragmentation harming market quality?
\textit{Journal of Financial Economics, 100}, 459--474.

\hangindent=1.5em
\hangafter=1
Pagano, M. (1989). Trading volume and asset liquidity. \textit{Quarterly
Journal of Economics, 104}, 255--274.

\hangindent=1.5em
\hangafter=1
Wang, W.-T. (2026). Causal effects of protocol-fee changes on liquidity
provision in automated market makers. arXiv preprint arXiv:2607.08525.
\url{https://doi.org/10.48550/arXiv.2607.08525}.
\endgroup

\appendix
\clearpage
\setcounter{figure}{0}
\renewcommand{\thefigure}{A.\arabic{figure}}
\setcounter{table}{0}
\renewcommand{\thetable}{A.\arabic{table}}
\setcounter{equation}{0}
\renewcommand{\theequation}{A.\arabic{equation}}
\setcounter{section}{0}
\section{Formal Results and Transport Accounting}

\subsection{Routing-Cost Decision Rule}

Every single-venue allocation has value at most \(B(q,s)\). Every allocation
supported on at least two venues has net value at most \(I(q,s)-k(z)\). If
\(I(q,s)>B(q,s)\), every maximizer of \(I(q,s)\) uses at least two venues, so
\(I(q,s)-k(z)\) is attainable. If \(I(q,s)=B(q,s)\), then
\(I(q,s)-k(z)\leq B(q,s)\). Hence
\[
\mathcal W(q,s,z)=\max\{B(q,s),I(q,s)-k(z)\}.
\]
Since \(I(q,s)=B(q,s)+G(q,s)\), an integrated alternative is strictly larger
than the best single venue exactly when \(G(q,s)>k(z)\). At equality, no
integrated allocation is strictly preferred.

\subsection{Proof of the Incidence Bounds}

\begin{proof}[Proof of Proposition~\ref{prop:sharp-bounds}]
For any scalar threshold \(a_i\) and every admissible
\(G_i\in[\underline G_i,\overline G_i]\),
\[
\1\{\underline G_i>a_i\}
\leq \1\{G_i>a_i\}
\leq \1\{\overline G_i>a_i\}.
\]
Multiplication by \(w_i\geq0\), summation, and normalization give the stated
bounds. Taking \(a_i=0\) gives gross incidence; taking \(a_i=k_i(z)\) gives
net incidence. Each endpoint is attainable from the interval information
alone by selecting the lower or upper admissible gain transaction by
transaction. The bounds are therefore sharp absent cross-transaction
restrictions. Finally, \(\theta^G\geq\underline\theta^G\) and
\(\theta^N(z)\leq\overline\theta^N(z)\), which imply the conservative-gap
inequality.
\end{proof}

\subsection{Proof of the Ordered-Cost Identity}

\begin{proof}[Proof of Proposition~\ref{prop:threshold-mass}]
For each observation, \(k_i^L\leq k_i^H\) implies
\[
\1\{\underline G_i>k_i^L\}-\1\{\underline G_i>k_i^H\}
=\1\{k_i^L<\underline G_i\leq k_i^H\}.
\]
The result follows after multiplying by the fixed weight, summing over the
target population, and dividing by total weight.
\end{proof}

\subsection{Exact K/Q/D Accounting}

For target population \(t\), donor population \(d\), and component set
\(\mathcal M=\{K,Q,D\}\), let \(T_S^{t\leftarrow d}\) transport the coherent
subset \(S\subseteq\mathcal M\) and re-solve the exact allocation problem.
The coalition value is
\begin{equation}
v_{t\leftarrow d}(S)=
\underline\theta^N\!\left(T_S^{t\leftarrow d}(\mathcal P_t)\right).
\end{equation}

\begin{proposition}[Exact coherent-transport accounting]
\label{prop:shapley}
For \(m\in\mathcal M\), define
\begin{equation}
\phi_{m,t\leftarrow d}
=\sum_{S\subseteq\mathcal M\setminus\{m\}}
\frac{|S|!\,(|\mathcal M|-|S|-1)!}{|\mathcal M|!}
\left[v_{t\leftarrow d}(S\cup\{m\})-v_{t\leftarrow d}(S)\right].
\end{equation}
Then
\begin{equation}
\sum_{m\in\mathcal M}\phi_{m,t\leftarrow d}
=v_{t\leftarrow d}(\mathcal M)-v_{t\leftarrow d}(\varnothing).
\end{equation}
\end{proposition}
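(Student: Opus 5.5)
The proof is the standard efficiency argument for Shapley values: the weighted sum over players telescopes along orderings of $\mathcal M$. I will use only that $v_{t\leftarrow d}$ is a well-defined real-valued set function on the eight subsets of $\mathcal M=\{K,Q,D\}$. Coherence of transport and exact re-solution are what make each coalition value $v_{t\leftarrow d}(S)=\underline\theta^N(T_S^{t\leftarrow d}(\mathcal P_t))$ a single number, independent of the order in which components are moved. Once that is granted, nothing about the execution problem enters, and the identity is purely combinatorial.

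\textbf{Step 1: rewrite $\phi_m$ as an average over orderings.} Let $\Pi$ be the set of $|\mathcal M|!$ orderings $\pi$ of $\mathcal M$. Let $P_m^\pi$ be the set of components preceding $m$ in $\pi$. For fixed $S\subseteq\mathcal M\setminus\{m\}$, the number of orderings with $P_m^\pi=S$ is $|S|!\,(|\mathcal M|-|S|-1)!$: the elements of $S$ are arranged first, then $m$, then the remaining components. Hence
\[
\phi_{m,t\leftarrow d}=\frac{1}{|\mathcal M|!}\sum_{\pi\in\Pi}\left[v(P_m^\pi\cup\{m\})-v(P_m^\pi)\right].
\]

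\textbf{Step 2: sum over $m$ and telescope.} Exchange the two finite sums. For a fixed ordering $\pi=(m_1,m_2,m_3)$, the marginal contributions
\[
v(\{m_1\})-v(\varnothing),\qquad v(\{m_1,m_2\})-v(\{m_1\}),\qquad v(\mathcal M)-v(\{m_1,m_2\})
\]
add to $v(\mathcal M)-v(\varnothing)$. Averaging this constant over the $|\mathcal M|!$ orderings gives the claimed identity.

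\textbf{Main obstacle.} The only step needing care is the counting in Step 1, which converts the paper's subset-weighted form into the ordering form. With $|\mathcal M|=3$ this can also be checked directly: the weights are $1/3$ for $|S|=0$ or $2$, and $1/6$ for $|S|=1$. Expanding all three $\phi_m$ and collecting terms, the coefficient on $v(\mathcal M)$ is $3\cdot\tfrac13=1$ and the coefficient on $v(\varnothing)$ is $-1$. Each singleton $v(\{m\})$ picks up $+\tfrac13$ from its own $\phi_m$ at $S=\varnothing$ and $-\tfrac16$ from each of the other two $\phi$'s at $S=\{m\}$, for a net $0$. Each pair $v(\{m,m'\})$ picks up $+\tfrac16$ from $\phi_m$ and from $\phi_{m'}$, and $-\tfrac13$ from the remaining $\phi$ at $S=\{m,m'\}$, again netting $0$. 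This direct expansion confirms the general argument in the case actually used.

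Finally, because the identity holds for every realization of $v$, it holds draw by draw inside the bootstrap. This is the exact adding-up property cited in the main text, where the computed violations are at the floating-point level.
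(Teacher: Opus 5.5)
Your proof is correct and follows the paper's own argument: it telescopes the marginal contributions along each permutation of $\mathcal M$ and averages over permutations, using only that reoptimization makes every coalition value $v_{t\leftarrow d}(S)$ well defined. Your count of the orderings with a given predecessor set, and your direct $|\mathcal M|=3$ coefficient check, spell out the step the paper compresses into ``averaging over permutations gives the stated contributions.''
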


The operator transports complete executable objects and reoptimizes after
each coalition. The contributions are directional accounting quantities
conditional on the common-support map, not structural or causal effects of
the transported components.

\begin{proof}[Proof of Proposition~\ref{prop:shapley}]
For any permutation of \(\mathcal M\), summing components' marginal
contributions telescopes from \(v_{t\leftarrow d}(\varnothing)\) to
\(v_{t\leftarrow d}(\mathcal M)\). Averaging over permutations gives the
stated contributions and adding-up identity. Reoptimization defines every
coalition value, so the result does not require additive execution technology.
\end{proof}

\end{document}